\newcommand{\ket}[1]{\ensuremath{|#1\rangle}}
\newcommand{\bra}[1]{\ensuremath{\langle #1|}}
\newcommand{\proj}[1]{\ket{#1}\bra{#1}}
\newcommand{\be}{\begin{equation}}
\newcommand{\ee}{\end{equation}}
\newcommand{\ba}{\begin{eqnarray}}
\newcommand{\ea}{\end{eqnarray}}
\newcommand{\pe}{\Lambda}
\newcommand{\tp}[1]{{P}_{#1}}
\newcommand{\E}{\mathcal{E}}
\newcommand{\tr}[2]{\mathrm{tr}_{#1}\left[ #2 \right]}
\newcommand{\bip}{\mathbb{C}^d\otimes\mathbb{C}^d}
\newcommand{\norm}[1]{\left\|#1\right\|}
\newcommand{\rIso}{\rho_{\rm iso}}
\newcommand{\id}{\mathbb{I}}
\newcommand{\F}[1]{\mathcal{F}_{\rm max}(#1)}
\newcommand{\PT}{\mathcal{P}}
\newtheorem{theorem}{Theorem}
\newtheorem{lemma}[theorem]{Lemma}
\newtheorem{proposition}[theorem]{Proposition}
\newtheorem{alemma}{Lemma}[section]
\newtheorem{adefinition}[alemma]{Definition}
\newtheorem{question}{Question}
\newtheorem{definition}{Definition}
\definecolor{nred}{rgb}{0.9,0.1,0.1}
\definecolor{nblack}{rgb}{0,0,0}
\definecolor{nblue}{rgb}{0.2,0.2,0.8}
\definecolor{ngreen}{rgb}{0.2,0.6,0.2}
\definecolor{ublue}{rgb}{0,0,0.5}
\definecolor{pur}{rgb}{0.75,0,0.75}
\definecolor{nngrn}{rgb}{0,0.5,0.5}
\newcommand{\blu}{\color{nblue}}
\begin{document}
\title{Entanglement preserving local thermalization}

\author{Chung-Yun Hsieh}
\email{chung-yun.hsieh@icfo.eu}
\affiliation{ICFO - Institut de Ci\`encies Fot\`oniques, The Barcelona Institute of Science and Technology, 08860 Castelldefels, Spain}

\author{Matteo Lostaglio}
\affiliation{ICFO - Institut de Ci\`encies Fot\`oniques, The Barcelona Institute of Science and Technology, 08860 Castelldefels, Spain}

\author{Antonio Ac\'in}
\affiliation{ICFO - Institut de Ci\`encies Fot\`oniques, The Barcelona Institute of Science and Technology, 08860 Castelldefels, Spain}\affiliation{ICREA-Instituci\'o Catalana de Recerca i Estudis Avan\c{c}ats, Lluis Companys 23, 08010 Barcelona, Spain}

\date{\today}

\begin{abstract}
We investigate whether entanglement can survive the thermalization of subsystems. We present two equivalent formulations of this problem: (1) Can two isolated agents, accessing only pre-shared randomness, locally thermalize arbitrary input states while maintaining some entanglement? (2) Can thermalization with local heat baths, which may be classically correlated but do not exchange information, locally thermalize arbitrary input states while maintaining some entanglement? We answer these questions in the positive at every nonzero temperature and provide bounds on the amount of preserved entanglement. We provide explicit protocols and discuss their thermodynamic interpretation: we suggest that the underlying mechanism is a speed-up of the subsystem thermalization process. We also present extensions to multipartite systems. Our findings show that entanglement can survive locally performed thermalization processes accessing only classical correlations as a resource. They also suggest a broader study of the channel's ability to preserve resources and of the compatibility between global and local dynamics.
\end{abstract}

\maketitle

\section{Introduction}\label{Sec:Introduction}
Entanglement is a core feature of quantum theory and one of the most representative resources in quantum information science. 
In fact, it is at the basis of quantum advantages in metrology \cite{Dobrzanski2014}, cryptography \cite{Pironio2010}, communication \cite{Cleve1997} and computation \cite{Bravyi2017}. 
Entanglement also impacts quantum thermodynamic protocols, e.g., by allowing one to extract more work than what would be possible with classical correlations \cite{Oppenheim2002,Perarnau-Llobet2015,Funo2013}, resulting in negative work cost of erasure~\cite{del_Rio2011} and strong heat backflows \cite{Jennings2010}. 
Entanglement is also a crucial ingredient to understand local equilibration~\cite{Kaufman2016} and its compatibility with global unitary evolution~\cite{popescu2006entanglement,Linden2009}.

While being a powerful resource, entanglement often does not survive interactions with an external environment. It is therefore a central question whether entanglement can be maintained by certain classes of dynamics.
From a thermodynamic point of view, one important class is thermalization, describing the evolution of generic states toward thermal equilibrium.
Formally, {\em thermalization} is defined as a transformation mapping {\em arbitrary} input states to a fixed output state -- the thermal state.
While entanglement is distributed at spatially separated locations, thermalization often acts locally and is known to destroy quantum correlations.
It is therefore important to know whether quantum theory allows global entanglement to persist after locally performed thermalizations.

One way to formalize this question is as follows.
Suppose an unknown input state is distributed to two agents at spatially separated locations.
We assume that the agents neither share additional quantum resources, such as another entangled state, nor can they communicate with each other. 
Each of them  has access to a local heat bath, and we allow for the two baths to be classically correlated across the bipartition.
Each party thermalizes their half of the (unknown) input state by coupling their local systems to the correspondent local bath. 
We assume the two dynamics remain independent, for example due to the timescales involved. 
Our central question is whether entanglement can survive when the local systems are thermalized [Fig.~\ref{Fig:TwoInterpretation} (a)].

We will show that the above question admits an {\em equivalent} reformulation as follows. 
Suppose two agents are restricted to perform local operations (LO) and can exploit pre shared randomness (SR) -- a set of physical dynamics (or simply {\em channels}~\cite{footnote-channel}) denoted by LOSR. 
The question above is then equivalently phrased as follows: Is there an LOSR channel that (i) locally thermalizes every input to predefined thermal states (i.e., it is locally indistinguishable from a thermalization) and (ii) the output is entangled at least for some input [Fig.~\ref{Fig:TwoInterpretation} (b)]?
Such channels, whose existence we want to explore, will be called {\em entanglement preserving local thermalizations} (EPLTs).
 
Ultimately, these are fundamental questions concerning the structure of quantum mechanics, specifically about the interplay between subsystem thermalization and quantum correlations. 
Here we ask if classical correlations/shared randomness alone allow for the preservation of entanglement in thermalizations. 
The answer that we find suggests that shared randomness can be a useful resource to sustain entanglement during thermalization.

\begin{figure}
\scalebox{0.8}{\includegraphics{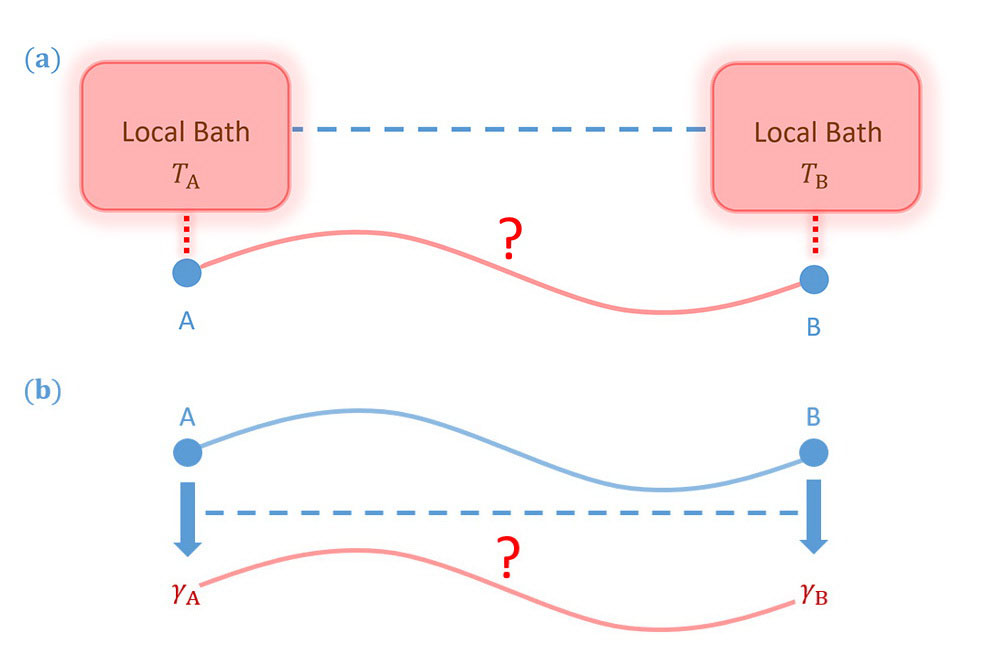} }
\caption{Schematic interpretation of the two formulations for the EPLT question. Dashed lines represent classical correlations, and continuous lines represent quantum correlations.
(a) \em Thermodynamic formulation. In this formulation, we ask whether entanglement can survive after subsystem thermalizations are achieved by coupling to classically correlated heat baths.
(b) \em Information-theoretic formulation. In this formulation, we ask whether entanglement can survive after a local operation plus shared randomness (LOSR) channel that is locally indistinguishable from a thermalization process.}
\label{Fig:TwoInterpretation} 
\end{figure}

\section{Main Question}
We first formalize the question described in the introduction in terms of a local thermalization task under LOSR channels, which are local dynamics assisted by pre shared classical correlations.
Formally, a bipartite LOSR channel is defined as 
\begin{eqnarray}
\mathcal{E} = \int (\mathcal{E}_{\rm A}^\lambda\otimes\mathcal{E}_{\rm B}^\lambda)p_\lambda d\lambda
\end{eqnarray} 
with $p_\lambda \geq 0$ and $\int d\lambda p_\lambda =1$, where $\mathcal{E}_{\rm A}^\lambda,\mathcal{E}_{\rm B}^\lambda$ are channels on the local systems for every $\lambda$.
To illustrate how to realize an LOSR channel, two local agents can share classical randomness, e.g., a third party samples $\lambda$ and, according to the probability distribution $p_\lambda$, distributes the outcomes to them, at the beginning of the task.
Each $\lambda$ corresponds to a specific local operation for each local agent (captured by $\mathcal{E}_{\rm A}^\lambda,\mathcal{E}_{\rm B}^\lambda$).
The local agents apply the corresponding local operations conditioned on the received $\lambda$.

Now, consider two spatially separated agents Alice (A) and Bob (B).
Each of them holds a system with local Hamiltonian $H_{\rm X}$ $({\rm X = A,B})$, and the total Hamiltonian is $H_{\rm A} \otimes \id + \id \otimes H_{\rm B}$.  
By means of a local process, they want to thermalize their local system to some local environment temperature $T_{\rm A}$ and $T_{\rm B}$, respectively. 
Denote the thermal state $\gamma_{\rm X}\coloneqq \frac{e^{-H_{\rm X}/kT_{\rm X}}}{{\rm tr}\left(e^{-H_{\rm X}/kT_{\rm X}}\right)}$, where $k$ is Boltzmann's constant.
We can now state the main definition:
\begin{definition}
A channel $\mathcal{E}$ on AB is a {\em local thermalization} to $(\gamma_{\rm A},\gamma_{\rm B})$ if
\begin{enumerate}
	\item \label{condition1} $\mathcal{E}$ is an LOSR channel;
	\item \label{condition2} $\tr{\rm A}{ \mathcal{E}(\rho_{\rm AB})} = \gamma_{\rm B}$, \; $\tr{\rm B}{ \mathcal{E}(\rho_{\rm AB})} = \gamma_{\rm A}$, \; $\forall\,\rho_{\rm AB}$.
\end{enumerate}
$\mathcal{E}$ is an \emph{entanglement preserving local thermalization} (EPLT) if, furthermore, there exists some input $\rho_{\rm AB}$ such that the output $\mathcal{E}(\rho_{\rm AB})$ is entangled. 
\end{definition}

In other words, a local thermalization $\mathcal{E}$ is {\em local} in two senses: that is, it is an LOSR channel (condition \ref{condition1}) that locally thermalizes every input (condition \ref{condition2}). 
An example of local thermalization is the channel \mbox{$\rho_{\rm AB} \mapsto \gamma_{\rm A} \otimes \gamma_{\rm B}$} for every $\rho_{\rm AB}$, which is not an EPLT. 

To gather intuition on the above definition, note that dropping either of the two conditions trivializes the dynamical question.
If we drop condition~\ref{condition1}, $\E$ can be any channel and, in particular, it can be any state preparation; then, our dynamical question concerning the existence of EPLTs is reduced to the ``static'' question of the existence of entangled states with given thermal marginals. 
Condition~\ref{condition1} avoids this trivialization by asking that $\E$ is an LOSR channel, which means that entanglement cannot be created~\cite{QCI-book}.

Also, note that if we drop the requirement that condition~\ref{condition2} holds for every state, then again the existence of entangled locally thermal states and the identity channel would trivially satisfy the requirements. 
For example, if $H_{\rm A} = H_{\rm B} = E \ket{E}\bra{E}$, and $T_{\rm A} = T_{\rm B}= T$, it would be enough to observe that the two qubit state $\sqrt{1/Z} \ket{00} + \sqrt{e^{-E/kT}/Z} \ket{11}$ (with $Z=1+e^{-E/kT}$) is locally thermal and entangled for every $T>0$. 
Hence, condition~\ref{condition2} crucially requires that the thermalization protocol does not depend on the input, which is also realistic from an operational point of view.

On the other hand, one may ask whether we could strengthen condition~\ref{condition1} by asking $\mathcal{E}$ to be a local operation {\em without} shared randomness.
However, as expected, no correlation, even classical ones, can be preserved in this scenario:
\begin{proposition}\label{Result:NoProduct}
Any product local thermalization to the marginals $(\gamma_{\rm A},\gamma_{\rm B})$ coincides with the constant channel \mbox{$(\cdot)\mapsto\gamma_{\rm A}\otimes\gamma_{\rm B}$.}
In other words, no correlation can be preserved by product local thermalizations.
\end{proposition}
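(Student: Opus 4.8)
The plan is to exploit the product structure forced by strengthening condition~\ref{condition1}. A product local thermalization carries no shared randomness, so it factorizes as $\E = \E_{\rm A}\otimes\E_{\rm B}$ for fixed local channels $\E_{\rm A},\E_{\rm B}$ (formally the degenerate case in which $p_\lambda$ is concentrated at a single $\lambda$). The key idea is that, because each local channel is trace-preserving, feeding product inputs completely decouples the two marginal requirements in condition~\ref{condition2}, pinning down each $\E_{\rm X}$ individually as a constant channel; the claim then follows by linearity.

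Concretely, I would first evaluate $\E$ on an arbitrary product input $\rho_{\rm A}\otimes\rho_{\rm B}$, giving $\E(\rho_{\rm A}\otimes\rho_{\rm B}) = \E_{\rm A}(\rho_{\rm A})\otimes\E_{\rm B}(\rho_{\rm B})$. Taking the B-marginal and using that $\E_{\rm A}$ is trace-preserving yields $\tr{\rm A}{\E(\rho_{\rm A}\otimes\rho_{\rm B})} = \mathrm{tr}[\E_{\rm A}(\rho_{\rm A})]\,\E_{\rm B}(\rho_{\rm B}) = \E_{\rm B}(\rho_{\rm B})$. Condition~\ref{condition2} demands this equal $\gamma_{\rm B}$ for every input, so $\E_{\rm B}(\rho_{\rm B}) = \gamma_{\rm B}$ for all states $\rho_{\rm B}$. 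The symmetric computation of the A-marginal gives $\E_{\rm A}(\rho_{\rm A}) = \gamma_{\rm A}$ for all $\rho_{\rm A}$.

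It then remains to upgrade ``constant on all density matrices'' to ``constant channel'' and to cover entangled inputs. Since the density matrices affinely span the trace-one Hermitian operators, linearity of $\E_{\rm X}$ forces $\E_{\rm X}(\,\cdot\,) = \mathrm{tr}[\,\cdot\,]\,\gamma_{\rm X}$. For any (possibly entangled) $\rho_{\rm AB}$, expanding in a product-operator basis and using $(\E_{\rm A}\otimes\E_{\rm B})(X\otimes Y) = \mathrm{tr}[X]\,\mathrm{tr}[Y]\,\gamma_{\rm A}\otimes\gamma_{\rm B}$ then gives $\E(\rho_{\rm AB}) = \mathrm{tr}[\rho_{\rm AB}]\,\gamma_{\rm A}\otimes\gamma_{\rm B} = \gamma_{\rm A}\otimes\gamma_{\rm B}$, which is the asserted constant channel. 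There is no genuine obstacle here: the only point requiring care is to resist stopping at product inputs and instead invoke linearity to reach the conclusion for all inputs, which is exactly what rules out any residual correlation, classical included.
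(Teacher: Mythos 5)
Your proof is correct, but it takes a genuinely different route from the paper's in the step that handles arbitrary (entangled) inputs. Both arguments begin identically: evaluating the marginal conditions of condition~\ref{condition2} on product inputs and using trace preservation of the other local channel to decouple the two requirements, thereby pinning down each $\E_{\rm X}$ as the constant map $\rho_{\rm X}\mapsto\gamma_{\rm X}$ on all density matrices (the paper compresses this step into the phrase ``by definition,'' which you usefully make explicit). The divergence comes next. You extend by pure linearity: since density matrices span the local operator space, $\E_{\rm X}(\,\cdot\,)=\mathrm{tr}[\,\cdot\,]\,\gamma_{\rm X}$ as a linear map, and expanding any $\rho_{\rm AB}$ in a product-operator basis yields $\E(\rho_{\rm AB})=\gamma_{\rm A}\otimes\gamma_{\rm B}$ directly. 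The paper instead notes that the constant channel is measure-and-prepare, hence entanglement breaking, so every output $\E(\rho_{\rm AB})$ is separable; it then uses idempotence, $(\E_{\rm A}\otimes\E_{\rm B})\circ(\E_{\rm A}\otimes\E_{\rm B})=\E_{\rm A}\otimes\E_{\rm B}$, to reduce the evaluation to separable inputs $\sum_i f_i\,\rho_{\rm A}^i\otimes\rho_{\rm B}^i$, where convexity finishes the computation. Your route is more elementary --- it requires no appeal to the entanglement-breaking characterization of Ref.~\cite{Horodecki2003} --- and it proves the slightly stronger fact that the global map equals $\mathrm{tr}[\,\cdot\,]\,\gamma_{\rm A}\otimes\gamma_{\rm B}$ on the whole operator space, not merely on states. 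What the paper's detour buys is conceptual rather than logical: by flagging that locally constant channels are entanglement breaking, it foreshadows the paper's central message that shared randomness is the indispensable ingredient allowing the EPLT constructions that follow. Both proofs are complete and valid.
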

\begin{proof}
Suppose $\mathcal{E}$ is a product local thermalization given by $\mathcal{E} = \mathcal{E}_{\rm A}\otimes\mathcal{E}_{\rm B}$.
By definition $\mathcal{E}_{\rm X}$ is identical to the constant channel $(\cdot)\mapsto\gamma_{\rm X}$,  which is a measure and prepare channel, thereby being an entanglement-breaking channel~\cite{Horodecki2003}.
This means $\mathcal{E}(\rho_{\rm AB}) = (\mathcal{E}_{\rm A}\otimes\mathcal{E}_{\rm B})(\rho_{\rm AB})$ is always a separable state.
Since $(\mathcal{E}_{\rm A}\otimes\mathcal{E}_{\rm B})\circ(\mathcal{E}_{\rm A}\otimes\mathcal{E}_{\rm B}) = \mathcal{E}_{\rm A}\otimes\mathcal{E}_{\rm B}$, for an arbitrary $\rho_{\rm AB}$, we have
\mbox{$
(\mathcal{E}_{\rm A}\otimes\mathcal{E}_{\rm B})(\rho_{\rm AB}) = \sum_i f_i \rho_{\rm A}^i\otimes\rho_{\rm B}^i
$}
for some $f_i \geq 0$, $\sum_i f_i =1$.
Then, for an arbitrary $\rho_{\rm AB}$,
\begin{eqnarray}
(\mathcal{E}_{\rm A}\otimes\mathcal{E}_{\rm B})(\rho_{\rm AB}) &&= (\mathcal{E}_{\rm A}\otimes\mathcal{E}_{\rm B})\circ(\mathcal{E}_{\rm A}\otimes\mathcal{E}_{\rm B})(\rho_{\rm AB}) \nonumber\\
&&=(\mathcal{E}_{\rm A}\otimes\mathcal{E}_{\rm B})(\sum_i f_i \rho_{\rm A}^i\otimes\rho_{\rm B}^i) \nonumber\\
&&= \sum_i f_i \gamma_{\rm A}\otimes\gamma_{\rm B} \nonumber\\
&&= \gamma_{\rm A}\otimes\gamma_{\rm B},
\end{eqnarray}
which completes the proof.
\end{proof}
Hence, the simplest EPLT, if it exists, must exploit shared randomness to preserve entanglement during a local thermalization process.

What we have formalized here is the information-theoretic formulation of the main problem [Fig.~\ref{Fig:TwoInterpretation} (b)].
An alternative formulation is the thermodynamic one [Fig.~\ref{Fig:TwoInterpretation} (a)], where LOSR is replaced by classically correlated heat baths, as mentioned in Sec.~\ref{Sec:Introduction}. 
The formal definition of the thermodynamic formulation and the proof of the equivalence between two formulations are given in Sec.~\ref{Sec:Eqv}. 
This equivalence allow us to rigorously analyze the problem in the information-theoretic formulation, while ensuring that we can always map back to a thermodynamic setting for a clearer physical meaning.

 Our main question is then as follows:
\begin{center}
{\em Do entanglement preserving local thermalizations exist?}
\end{center}

\section{Existence of EPLT}
We now turn to the existence of EPLT at every nonzero local temperature.
From now on we assume equal finite local dimension $d$, no degeneracies, and finite energies.
First, we need to introduce the {\em $(U\otimes U^*)$-twirling operation}~\cite{Horodecki1999,Bennett1996}, which is defined by
\begin{eqnarray}
\mathcal{T}(\rho_{\rm AB})\coloneqq\int_{U(d)}(U\otimes U^*)\rho_{\rm AB}(U\otimes U^*)^\dagger dU,
\end{eqnarray} 
where the integration is taken over the group $U(d)$ of unitary operators in dimension $d$ with Haar measure $dU$.
Operationally, twirling results from the application of coordinated random local unitaries.
This turns the local systems into maximally entropic states, while correlations between them can still exist after the operation.

The output of $\mathcal{T}$ is always an isotropic state~\cite{Horodecki1999}, 
\begin{eqnarray}
\rIso(p)\coloneqq p\proj{\Psi_d^+} + (1-p)\frac{\id_{\rm AB}}{d^2},
\end{eqnarray}
where \mbox{$\ket{\Psi_d^+}\coloneqq\frac{1}{\sqrt{d}}\sum_{n=0}^{d-1}\ket{n}\otimes\ket{n}$} is maximally entangled, and $p\in[-\frac{1}{d^2-1},1]$ due to the positivity of quantum states.
Furthermore, it can preserve entanglement since $\bra{\Psi_d^+}\mathcal{T}(\rho_{\rm AB})\ket{\Psi_d^+} = \bra{\Psi_d^+}\rho_{\rm AB}\ket{\Psi_d^+}$~\cite{Horodecki1999}, together with the fact that $\rIso$ is entangled if and only if \mbox{$ \bra{\Psi_d^+}\rIso\ket{\Psi_d^+}>\frac{1}{d}$}~\cite{Horodecki1999}. 
Hence $\mathcal{T}(\rho_{\rm AB})$ is entangled if and only if $\bra{\Psi_d^+}\rho_{\rm AB}\ket{\Psi_d^+}>\frac{1}{d}$.

Formally, we introduce the first candidate EPLT, 
\begin{eqnarray}\label{Eq:TheMap}
\mathcal{E}^{\epsilon}(\cdot) \coloneqq \mathcal{D}_{\eta_{\rm A}^\epsilon\otimes\eta_{\rm B}^\epsilon}^{(1-\epsilon)}\circ\mathcal{T}(\cdot),
\end{eqnarray}
\begin{equation}
\label{eq:mixingchannel}
\mathcal{D}_\sigma^p(\cdot)\coloneqq p\sigma + (1-p)\mathcal{I}(\cdot),
\end{equation}
where $\eta_{\rm X}^\epsilon\coloneqq \gamma_{\rm X} + \frac{\epsilon}{1-\epsilon} \left(\gamma_{\rm X} - \frac{\id_{\rm X}}{d}\right)$,
with $\epsilon\in[0,1]$ to be defined, and $\gamma_{\rm X}$ is the given local thermal state. 
In the information-theoretic formulation, this channel can be understood in two steps: First, local agents A and B realize the twirling operation $\mathcal{T}$, and then they use pre-shared randomness to mix the output of the twirling with the state $\eta_{\rm A}^\epsilon\otimes\eta_{\rm B}^\epsilon$, with the desired probability $1-\epsilon$.

Before stating the main result, first we show that the channels defined by Eq.~\eqref{Eq:TheMap} are local thermalization in certain parameter regimes.
\begin{lemma}\label{Lemma:LT}
$\mathcal{E}^\epsilon$ is a local thermalization to $(\gamma_{\rm A},\gamma_{\rm B})$ for all
$
0\le\epsilon\le\epsilon_*\coloneqq d\tp{\rm min},
$
where $\tp{\rm min}$ is the smallest eigenvalue among $\gamma_{\rm A}$ and $\gamma_{\rm B}$.
\end{lemma}
\begin{proof}
First, the definition of $\eta_{\rm X}^\epsilon$ implies that $\mathcal{E}^\epsilon$ will locally behave as a thermalization.
More precisely, Eq.~\eqref{Eq:TheMap} and the definition of $\eta_{\rm X}^\epsilon$ give
\begin{eqnarray}
{\rm tr}_{\rm A}\left[\mathcal{E}^\epsilon(\cdot)\right] =(1-\epsilon)\eta_{\rm B}^\epsilon + \epsilon\frac{\id_{B}}{d}= \gamma_{\rm A},
\end{eqnarray}
and the same by exchanging A and B.
Hence, it suffices to show that $\mathcal{E}^\epsilon$  is an LOSR channel in order to prove it is a local thermalization.
From the definition, it suffices to prove that $\eta_{\rm X}^\epsilon$ is a quantum state when $\epsilon$ falls into the prescribed region.
Write $\gamma_{\rm X} = \sum_{n=0}^{d-1}P_n^{\rm X}\proj{n}_{\rm X}$ with $1\ge P_0^{\rm X}\ge P_1^{\rm X}\ge ...\ge P_{d-1}^{\rm X}\ge0$.
From the definition $\eta_{\rm X}^\epsilon\coloneqq \gamma_{\rm X} + \frac{\epsilon}{1-\epsilon} \left(\gamma_{\rm X} - \frac{\id_{\rm X}}{d}\right)$, we have $\eta_{\rm X}^\epsilon=\sum_{n=0}^{d-1}Q_n^{\rm X}\proj{n}$, with 
\begin{eqnarray}\label{Eq:QX}
Q^{\rm X}_{n} = \frac{1}{1-\epsilon}P^{\rm X}_{n} - \frac{\epsilon}{d(1-\epsilon)}.
\end{eqnarray}
Since $\epsilon \leq 1$, we have the hierarchy $Q^{\rm X}_{0}\ge Q^{\rm X}_{1}\ge...\ge Q^{\rm X}_{d-1}$ and the normalization condition $\sum_{n=0}^{d-1}Q^{\rm X}_{n} = 1$.
Hence, it suffices to impose $ Q^{\rm X}_{d-1} \geq 0$ to make sure $\eta_{\rm X}^\epsilon$ is a quantum state. 
This gives $\epsilon \le d P^{\rm X}_{d-1}$ (we have $P^{\rm \rm X}_{d-1} \leq \frac{1}{d}$, since $\gamma_{\rm X}$ is a thermal state) for ${\rm X=A,B}$, which leads to the desired range $0\le\epsilon\le\epsilon_*\coloneqq d\tp{\rm min}$.
\end{proof}
With the above lemma in hand, our first main result can be stated as follows (note again that we only consider finite-energy Hamiltonians):
\begin{theorem}\label{Theorem:Existence}
$\mathcal{E}^{\epsilon_*}$ is an EPLT to $(\gamma_{\rm A},\gamma_{\rm B})$ for all \mbox{$T_{\rm A},T_{\rm B}>0$}.
\end{theorem}
\begin{proof}
It suffices to show that the output state is entangled when the input state is $\ket{\Psi_d^+}$.
We use the {\em positive partial transpose} (PPT) criterion~\cite{Peres1996, Horodecki1996} to detect the entanglement of the output.
Again, write $\eta_{\rm X}^\epsilon=\sum_{n=0}^{d-1}Q_n^{\rm X}\proj{n}$; then we have
\begin{eqnarray}
&&\mathcal{E}^{\epsilon_*}(\proj{\Psi_d^+})\nonumber\\
&&= \epsilon_*\proj{\Psi_d^+} + (1-\epsilon_*)\sum_{n,m=0}^{d-1}Q_n^{\rm A}Q_m^{\rm B}\proj{nm}\nonumber\\
&&=\sum_{n,m=0}^{d-1}\left[\frac{\epsilon_*}{d}\ket{nn}\bra{mm} + (1-\epsilon_*)Q_n^{\rm A}Q_m^{\rm B}\proj{nm}\right].\quad\;
\end{eqnarray}
Now we take the partial transpose on the subsystem ${\rm B}$, which gives the following operator:
\begin{eqnarray}
&&\sum_{n,m=0}^{d-1}\left[\frac{\epsilon_*}{d}\ket{nm}\bra{mn} + (1-\epsilon_*)Q_n^{\rm A}Q_m^{\rm B}\proj{nm}\right]\nonumber\\
&&=\left(\bigoplus_{n\neq m}M_{nm}\right)\oplus D,
\end{eqnarray}
where 
\begin{eqnarray}
M_{nm}\coloneqq \left(\begin{matrix}(1-\epsilon_*)Q_n^{\rm A}Q_m^{\rm B} & \frac{\epsilon_*}{d}\\ \frac{\epsilon_*}{d} & (1-\epsilon_*)Q_m^{\rm A}Q_n^{\rm B}\end{matrix}\right),
\end{eqnarray}
and
\begin{eqnarray}
D\coloneqq\bigoplus_{n=0}^{d-1} \left[\frac{\epsilon_*}{d}+(1-\epsilon_*)Q_n^{\rm A}Q_n^{\rm B}\right]
\end{eqnarray}
is the contribution of the diagonal terms.
To see that the output is entangled, it suffices to show that there exists a negative eigenvalue of at least one $M_{nm}$.
To this end, we first note that  when we substitute $\epsilon = \epsilon_* = dP_{\rm min}$ in Eq.~\eqref{Eq:QX}, we have $Q_{d-1}^{\rm A} = 0$ (without loss of generality, we assume $P_{\rm min} = P^{\rm A}_{d-1}$).
This means that for every $m<d-1$, we have
\begin{eqnarray}
M_{d-1,m} = \left(\begin{matrix}0 & \frac{\epsilon_*}{d}\\ \frac{\epsilon_*}{d} & (1-\epsilon_*)Q_m^{\rm A}Q_{d-1}^{\rm B}\end{matrix}\right),
\end{eqnarray}
which will have a negative eigenvalue if the off-diagonal terms are positive; namely, when $\epsilon_*>0$.
This completes the proof.
\end{proof}
Theorem~\ref{Theorem:Existence} shows the existence of EPLT in the most general case for bipartite systems, apart from the special case of zero temperature.
We leave the discussion of the zero-temperature case and further remarks for the following independent sections.
We note that as $T_{\rm A},T_{\rm B} \rightarrow + \infty$, we have $\epsilon_* \rightarrow 1$ and $\mathcal{E}^{\epsilon_*} \rightarrow \mathcal{T}$ (this also means twirling operation is an EPLT with infinite local temperatures or fully degenerate local Hamiltonians). 
In this sense, $\mathcal{E}^{\epsilon}$ can be considered as a finite temperature extension of the twirling operation.

\subsection{Zero-Temperature Case Discussion}
The previous theorem leaves out only the case $T_{\rm A}= T_{\rm B}=0$, which we treat separately here.
Again assuming equal finite local dimension $d$, we separately consider two cases, depending on whether or not there is ground-state degeneracy on the systems.
In the latter case, the corresponding local thermal state will be given by the unique {\em pure} ground state of the local Hamiltonian. Then one can immediately conclude that no entanglement can be preserved, because a pure state cannot be correlated with any other system.
Hence, no EPLT exists. 
On the other hand, if both local systems admit ground-state degeneracy, then EPLTs exist even when $T_{\rm A} = T_{\rm B} = 0$.
This can be realized by means of an energy measurement, a standard thermalization of the local systems conditioned on the measurement results, and a global $(U\otimes U^*)$-twirling operation within the ground energy subspace.
We leave the details to Appendix~\ref{App:Degenerate}.
This confirms that EPLTs exist in all nontrivial scenarios.

\subsection{Improved Entanglement Certification of Qubit-Qubit Systems}
As we discussed, EPLTs exist for all nonzero temperatures and finite-energy Hamiltonians by using the sufficiency of the PPT criterion. 
In fact, since it is also a necessary condition in two-qubit systems, a more detailed characterization of the output entanglement can be given in this case:
\begin{proposition}\label{Prop:2qubit}
For a two qubit system with 
\mbox{$\gamma_{\rm A} = \gamma_{\rm B}\neq\proj{0}$}, one has that $\mathcal{E}^{\epsilon_*}(\rho_{\rm AB})$ is entangled if and only if $\bra{\Psi_2^+}\rho_{\rm AB}\ket{\Psi_2^+} > \frac{1}{2}$. 
\end{proposition}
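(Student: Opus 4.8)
The plan is to exploit the special structure the candidate channel acquires at $\epsilon = \epsilon_*$ in the qubit case, and then to invoke the Peres--Horodecki criterion, which for two qubits decides separability exactly and therefore upgrades the one-sided sufficient condition of Theorem~\ref{Theorem:Existence} into an ``if and only if''.

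First I would specialize the construction to $d=2$ and $\epsilon = \epsilon_* = 2\tp{\rm min}$. Since $\gamma_{\rm A} = \gamma_{\rm B}\neq\proj{0}$, the smallest eigenvalue $\tp{\rm min} = P_1^{\rm X}$ is strictly positive, so $\epsilon_*\in(0,1]$ and the scenario is nondegenerate. The key observation is that at $\epsilon=\epsilon_*$ the positivity constraint $Q^{\rm X}_{d-1}\ge 0$ from the proof of Theorem~\ref{Theorem:Existence} is saturated: here $Q^{\rm X}_{1}=0$ and hence $Q^{\rm X}_{0}=1$, so $\eta_{\rm X}^{\epsilon_*}=\proj{0}$ collapses onto the ground-state projector. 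The output therefore simplifies to
\begin{equation}
\sigma\coloneqq\mathcal{E}^{\epsilon_*}_{(\gamma_{\rm A},\gamma_{\rm B})}(\rho_{\rm AB}) = (1-\epsilon_*)\proj{00} + \epsilon_*\,\mathcal{T}(\rho_{\rm AB}),
\end{equation}
a mixture of a pure product state and an isotropic state. Next I would make $\mathcal{T}(\rho_{\rm AB})$ explicit: because the twirling output is always isotropic and preserves the singlet fraction [Eq.~\eqref{Eq:SingletFraction}], one has $\mathcal{T}(\rho_{\rm AB})=\rIso(p)$ with $p$ fixed by $F\coloneqq\bra{\Psi_2^+}\rho_{\rm AB}\ket{\Psi_2^+}$ through $\tfrac{1+3p}{4}=F$. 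Writing $\sigma$ in the computational basis then yields a matrix whose only off-diagonal entries are the coherences inherited from $\proj{\Psi_2^+}$.

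Then I would apply the PPT criterion. Taking the partial transpose on B moves the single pair of coherences into the $\{\ket{01},\ket{10}\}$ block, leaving $\sigma^{T_{\rm B}}$ block-diagonal with two manifestly non-negative diagonal entries and a $2\times 2$ block whose two eigenvalues are, up to positive factors, $1+2F$ and $1-2F$. I expect the only eigenvalue that can turn negative to be the one proportional to $(1-2F)$, so that $\sigma^{T_{\rm B}}\not\ge 0$ precisely when $F>1/2$. Since for two qubits PPT is necessary and sufficient for separability, this settles both implications simultaneously.

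The computation itself is routine; the conceptual content is the collapse $\eta_{\rm X}^{\epsilon_*}=\proj{0}$, which is what renders the output simple enough to analyze exactly. The main point to watch is the \emph{only if} direction: it rests essentially on the $2\times 2$ feature that PPT detects all entanglement, so one should not expect the clean threshold $F=1/2$ to survive in higher dimensions, which is exactly why the sharpening is stated only for qubits. A secondary check is to confirm that the diagonal entries of $\sigma^{T_{\rm B}}$ remain non-negative over the whole range $F\in[0,1]$, guaranteeing that only the eigenvalue proportional to $(1-2F)$ governs the transition.
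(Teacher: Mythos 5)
Your proposal is correct and takes essentially the same route as the paper's proof: both exploit the collapse $\eta_{\rm X}^{\epsilon_*}=\proj{0}$ to write the output as $(1-\epsilon_*)\proj{00}+\epsilon_*\,\rIso(p)$ with $p$ fixed by the preserved singlet fraction, and then apply the two-qubit PPT criterion to the resulting block structure. Your block eigenvalues proportional to $1\pm 2F$ are exactly the paper's condition $C-|D|<0$, i.e. $p>1/3 \Leftrightarrow F>1/2$, so nothing essential differs or is missing.
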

\begin{proof}	
Setting $\gamma_{\rm X} = (1-q) \ket{0}\bra{0}_{\rm X} + q \ket{1}\bra{1}_{\rm X}$, a direct computation shows $\eta_{\rm A}^{\epsilon_*} \otimes \eta_{\rm B}^{\epsilon_*} = \ket{00}\bra{00}$ and, if \mbox{$p = \frac{ 4\bra{\Psi_2^+}\rho_{\rm AB}\ket{\Psi_2^+}-1}{3}$}, 
\begin{eqnarray}
&\mathcal{E}^{\epsilon_*}\left(\rho_{\rm AB}\right) &= (1-2q) \eta_{\rm A}^{\epsilon_*}\otimes\eta_{\rm B}^{\epsilon_*} + 2q \rIso(p)\nonumber\\
&&=A\proj{11}+B\proj{00} \nonumber\\
&&+ C(\proj{10}+\proj{01})\nonumber\\
&&+D\left(\ket{11}\bra{00} + \ket{00}\bra{11}\right),
\end{eqnarray}
where
\begin{eqnarray}
&&A= q\times \frac{1+p}{2}, B=(1-2q) + q\times \frac{1+p}{2},\nonumber\\
&&C = q\times\frac{1-p}{2}, D=qp.
\end{eqnarray}
We note that $A,B,C,D$ are all non-negative.
This means the partial transpose on Bob's side has a negative eigenvalue if and only if $C - |D| < 0$, which gives $p>\frac{1}{3}$, or, equivalently, $\bra{\Psi_2^+}\mathcal{T}(\rho_{\rm AB})\ket{\Psi_2^+} > \frac{1}{2}$. Since \mbox{$\bra{\Psi_2^+}\mathcal{T}(\rho_{\rm AB})\ket{\Psi_2^+} =  \bra{\Psi_2^+}\rho_{\rm AB}\ket{\Psi_2^+}$}, the result follows by using the PPT criterion.
\end{proof}

\subsection{Bounds on Preserved Entanglement of EPLT}
Since we proved that EPLTs universally exist, a natural question is whether it is possible to quantify the output entanglement.
For the sake of measuring quantum correlations at the end of the local thermalization, we consider the {\em fully entangled fraction }(FEF)~\cite{Horodecki1999-2,Albeverio2002}. 
For a given bipartite quantum state $\rho_{\rm AB}$ acting on $\bip$, FEF is defined by
\begin{eqnarray}
\F{\rho_{\rm AB}}\coloneqq\max_{\ket{\Phi_d}}\bra{\Phi_d}\rho_{\rm AB}\ket{\Phi_d},
\end{eqnarray}
where the optimization is taken over all maximally entangled states $\ket{\Phi_d}\in\bip$.
A well-known fact about FEF is its capacity to characterize different entanglement and nonlocal properties~\cite{Horodecki1999-2,Albeverio2002, RMP-Bell, Hsieh2016,Hsieh2018E,Zhao2010,Cavalcanti2013,Quintino2016,RMP-Ent}. 
Then a direct computation shows the following bound for the output entanglement for the channel given in Eq.~\eqref{Eq:TheMap}:

\begin{proposition}
For all input states $\rho_{\rm AB}$, we have
\begin{eqnarray}
\label{eq:fmaxbound}
\mathcal{F}_{\rm max}[\mathcal{E}^\epsilon(\rho_{\rm AB})]\ge \epsilon\bra{\Psi_d^+}\rho_{\rm AB}\ket{\Psi_d^+}.
\end{eqnarray}
\end{proposition}
\begin{proof}
Recall $\bra{\Psi_d^+}\mathcal{T}(\rho_{\rm AB})\ket{\Psi_d^+} = \bra{\Psi_d^+}\rho_{\rm AB}\ket{\Psi_d^+}$~\cite{Horodecki1999}, which implies 
\begin{eqnarray}
\mathcal{F}_{\rm max}[\mathcal{E}^\epsilon(\rho_{\rm AB})] &&\ge(1-\epsilon)\bra{\Psi_d^+}(\eta_{\rm A}^\epsilon\otimes\eta_{\rm B}^\epsilon)\ket{\Psi_d^+} \nonumber\\
&&+ \epsilon\bra{\Psi_d^+}\rho_{\rm AB}\ket{\Psi_d^+}.
\end{eqnarray}
Since $\bra{\Psi_d^+}(\eta_{\rm A}^\epsilon\otimes\eta_{\rm B}^\epsilon)\ket{\Psi_d^+}\ge0$, the proof is completed.
\end{proof}
By taking $\epsilon=\epsilon_*=d\tp{\rm min}$, a sufficient condition for the output of $\mathcal{E}^{\epsilon_*}$ to be entangled is then 
\begin{eqnarray}
\tp{\rm min}> \frac{1}{\bra{\Psi_d^+}\rho_{\rm AB}\ket{\Psi_d^+} d^2 }.
\end{eqnarray}

\subsection{Multipartite Extension}
The existence of EPLT in the multipartite case can be established by using a multipartite twirling and the corresponding entanglement fraction. 
In particular, it can be shown that genuine multipartite entanglement of the {\em Greenberger-Horne-Zeilinger} (GHZ) state~\cite{GHZ} can be preserved by local thermalizations. 
We leave the detailed analysis to Appendix~\ref{App:Multipartite}.

\section{Equivalence Between Information-Theoretic And Thermodynamic Formulations}\label{Sec:Eqv}
So far we have analyzed the information-theoretic formulation [Fig.~\ref{Fig:TwoInterpretation} (b)] and we can now formalize the thermodynamic formulation [Fig.~\ref{Fig:TwoInterpretation} (a)].
Schematically, this formulation depicts two local systems interacting with their individual heat baths and thermalizing.
The heat baths do not interact between them, but can share some initial classical correlations.
The resulting dynamics can hence be characterized as follows:

\begin{definition}\label{Def:Bath}
A channel $\mathcal{C}$ of a bipartite system {\rm AB} is a {\em local bath thermalization} to $(\gamma_{\rm A},\gamma_{\rm B})$ if
\begin{enumerate}
\item $\mathcal{C}(\rho_{\rm AB}) = \tr{\rm A'B'}{\mathcal{V}_{\rm AA'} \otimes \mathcal{V}_{\rm BB'} (\rho_{\rm AB} \otimes \gamma_{\rm A'B'}) }$, where $\mathcal{V}_{\rm XX'}(\cdot)\coloneqq U_{\rm XX'}(\cdot)U_{\rm XX'}^\dagger$ are local unitary dynamics on ${\rm XX'}$ and $\gamma_{\rm A'B'}$ is a separable thermal state.  
\item $\tr{\rm A}{ \mathcal{C}(\rho_{\rm AB})} = \gamma_{\rm B}$, \; $\tr{\rm B}{ \mathcal{C}(\rho_{\rm AB})} = \gamma_{\rm A}$, \; $\forall \rho_{\rm AB}$.
\end{enumerate}
$\mathcal{C}$ is an {\em entanglement preserving local bath thermalization} if there exists $\rho_{\rm AB}$ such that $\mathcal{C}(\rho_{\rm AB})$ is entangled.
\end{definition}

The above notion illustrates the thermodynamic formulation, and the alternative form of the question in Sec.~\ref{Sec:Introduction} is then:
{\em Do entanglement preserving local bath thermalizations exist?}
The following result allows us to rephrase the results in this new formulation:
\begin{theorem}\label{Prop:GeneralEquivalence}
\label{prop:equivalence}
A bipartite channel is a local bath thermalization if and only if it is a local thermalization. 
\end{theorem}
The proof is given in Appendix~\ref{App:Eqv}.
Theorem~\ref{prop:equivalence} has the following consequence: If two local agents perform local interactions with a thermal bath that thermalize their local state for every input, even knowing that the bath has no entanglement across the bipartition, they still cannot conclude that their output is separable. 
Classical correlations alone in the bath can allow for the preservation of entanglement in the system, even after locally the thermalization is complete: entanglement preserving local bath thermalizations exist, for some separable (non product) $\gamma_{\rm A'B'}$.
We note that Proposition~\ref{Result:NoProduct} and Theorem~\ref{prop:equivalence}  imply that no entanglement preserving local bath thermalization exists if we restrict \mbox{$\gamma_{\rm A'B'} = \gamma_{\rm A'} \otimes \gamma_{\rm B'}$} in the setting of Theorem~\ref{prop:equivalence}. 

As an implication, Theorems~\ref{Theorem:Existence} and~\ref{Prop:GeneralEquivalence} imply that classical correlations in the bath are sufficient to preserve quantum correlations in the system for some input states, even after full thermalizations of the subsystems, at every nonzero temperature.

\section{Implementation and Mechanism}
Given the existence of EPLT, a natural and important question is the following:
\begin{center}
{\em
What is the mechanism behind EPLT?
}
\end{center}
To answer this question, we introduce another family of EPLT and use it to study the underlying physical reason.

\subsection{Alternative EPLT}
As the first step, we want to introduce an explicit thermodynamic protocol to achieve EPLTs.
To this end, we consider the following alternative EPLT construction:
\begin{eqnarray}\label{Eq:AlternativeEPLT}
\pe^{(\epsilon_{\rm A},\epsilon_{\rm B})}\coloneqq \left[\mathcal{D}_{\eta_{\rm A}^{\epsilon_{\rm A}}}^{(1-\epsilon_{\rm A})}\otimes\mathcal{D}_{\eta_{\rm B}^{\epsilon_{\rm B}}}^{(1-\epsilon_{\rm B})}\right]\circ\mathcal{T},
\end{eqnarray} 
with $\mathcal{D}_\sigma^p$ given in Eq.~\eqref{eq:mixingchannel}. 
The same proof shows that $\pe^{(\epsilon_{\rm A},\epsilon_{\rm B})}$ is a local thermalization to $(\gamma_{\rm A},\gamma_{\rm B})$ for all $0\le\epsilon_{\rm X}\le dP_{\rm min}^{\rm X}$, where $P_{\rm min}^{\rm X}$ is the smallest eigenvalue of $\gamma_{\rm X}$.
Moreover, we have \mbox{$\mathcal{F}_{\rm max}\left[\pe^{(\epsilon_{\rm A},\epsilon_{\rm B})}(\rho_{\rm AB})\right]\ge \epsilon_{\rm A}\epsilon_{\rm B}\bra{\Psi_d^+}\rho_{\rm AB}\ket{\Psi_d^+}$}.
This estimate means that the family $\pe^{(\epsilon_{\rm A},\epsilon_{\rm B})}$ includes EPLT when the local temperatures are not too low.

Eq.~\eqref{Eq:AlternativeEPLT} has a clear thermodynamic interpretation.
First, Alice and Bob perform the twirling (by applying random unitaries using pre-shared randomness). 
Then, they perform a sudden quench of the local system Hamiltonians $H_{\rm X} \mapsto H^{\epsilon_{\rm X}}_{\rm X}$ (where the energies are tuned, but not the eigenstates), with $\eta_{\rm X}^{\epsilon_{\rm X}} \propto e^{-H^{\epsilon_{\rm X}}_{\rm X}/kT_{\rm X} }$. 
At this point, by thermal contact with their local environments, they let their local system undergo 
a {\em partial thermalization}~\cite{footnote2}, 
\begin{eqnarray}\label{Eq:PTM}
\PT_\gamma^t(\cdot)\coloneqq e^{-\frac{t}{\tau_{\gamma}}}(\cdot) + \left(1-e^{-\frac{t}{\tau_{\gamma}}}\right)\gamma,
\end{eqnarray}
where $\tau_{\gamma}\in(0,\infty)$ is the thermalization time scale corresponding to $\gamma$.
Note that $\PT_\gamma^t=\mathcal{D}_\gamma^p$ with $p = 1- e^{\frac{-t}{\tau_\gamma}}$, and hence $\mathcal{D}_\gamma^p$ can be realized by a partial thermalization.
Finally, they quench their Hamiltonians back to $H_{\rm X}$. 
At this point, whatever the input was, the local states are $\gamma_{\rm X}$, i.e., A and B both have thermalized. 
However, quantum correlations can be preserved once local thermality is reached. This is in contrast to what happens if they each let their local system thermalize to an independent bath according to Eq.~\eqref{Eq:PTM}: in this case, local thermality is only reached when the global state is $\gamma_{\rm A} \otimes \gamma_{\rm B}$~\cite{footnote1}.

\subsection{Mechanism}
Since, in simple thermalization models, local thermality is reached only once correlations between the two parties are destroyed, the existence of EPLT suggests that the corresponding protocols rely on a  ``local speed-up'' of the thermalization. 
We will gather evidence for this intuition by taking the EPLT of Eq.~\eqref{Eq:AlternativeEPLT} as a model (with $\mathcal{D}_\gamma^p$ described by $\PT_\gamma^t$), showing that the local thermalization process is sped up through an LOSR channel that is able to preserve some entanglement. 
This makes sure that at local equilibrium, not all the (quantum) correlations are lost.

From Eq.~\eqref{Eq:PTM}, we learn that partial thermalization takes infinite time to thermalize the subsystem to $\gamma_{\rm X}$.
On the other hand, the local behavior of Eq.~\eqref{Eq:AlternativeEPLT} on the subsystem ${\rm X}$ 
is a random unitary followed by an incomplete partial thermalization with completion time \mbox{$t = -\tau_{\eta_{\rm X}^{\epsilon_{\rm X}}}\ln{\epsilon_{\rm X}}$}.
Since the completion time $t$ is always finite~\cite{footnote4}, we conclude that the subsystem thermalization is faster in the EPLT scheme when the time $t_{\mathcal{T}}$ to implement random unitaries is finite.

One may, however, suspect that in practice, the exact twirling requires $t_{\mathcal{T}}=\infty$; let us show that even if that is the case, the same speed-up argument holds. In fact, with a finite number $N$ of unitaries one can realize an approximation $\mathcal{T}^{(N)}$ of $\mathcal{T}$, with exponentially good precision in $N$~\cite{Toth2007,footnote3}. 
Since the completion time is $t_{\mathcal{T}^{(N)}} = N t_U$, with $t_U$ the time necessary to perform a single unitary, we will have $t_{\mathcal{T}^{(N)}}$ to scale logarithmically with the required precision $\delta$, with a constant prefactor $t_U$. 
This implies $t_{\mathcal{T}^{(N)}} \rightarrow \infty$   as $\delta \rightarrow 0$. 
However, as long as $t_U$ is sufficiently small compared with the typical thermalization time $\tau_{\gamma_{\rm X}} $, $\delta>0$ is small enough and $N$ is large enough, we expect a shorter time in the EPLT thermalization scheme compared with standard thermalization described by Eq.~\eqref{Eq:PTM}.
More precisely, we show that for any $\rho_{\rm X}\neq\gamma_{\rm X}$ and $\delta>0$ small enough, the EPLT scheme realizes a speed-up of $\delta$-thermalization~\cite{footnote6} with probability \mbox{$1 - O(\delta^4)$} whenever
\begin{eqnarray}\label{Eq:Speed-Up}
\tau_{\gamma_{\rm X}} > t_U\times\frac{8}{\ln2}.
\end{eqnarray}
We refer to Appendix~\ref{App:t_u} for the detailed proof and the complete statement of the theorem. 
In practice, the thermalization time-scale $\tau_{\gamma_{\rm X}}$ is often much longer than the time-scale $t_U$ of applying a single unitary operator, and hence the condition of Eq.~\eqref{Eq:Speed-Up} holds in various physical settings.

We finish the discussion by providing an example.
Suppose $\tau_{\eta_{\rm X}} = \tau_{\gamma_{\rm X}} = 100t_U$, which means it is possible to establish speed-up.
If one sets $P_{\rm min} = \frac{2}{d^2}$ and $\delta = 10^{-3}$, then we have $N = 92$ [from Eq.~\eqref{Eq:N_delta}].
This means we have speed-up for all $\rho_{\rm X}\neq\gamma_{\rm X}$ satisfying 
$
\norm{\rho_{\rm X}-\gamma_{\rm X}}_\infty >  d\times 0.00126
$
with success probability of implementation higher than $1 - 10^{-13}$.

\section{Conclusions}
We studied the robustness of quantum correlations under local thermalizations, which are a subset of local operations and classical communication (LOCC) that locally act as standard thermalizations. 
The main result can be summarized as showing that entanglement can survive under locally performed thermalizations at every nonzero temperature. 

This can be understood in two ways: on the one hand, it suggests that in the presence of local environments that degrade and eventually destroy shared quantum resources, in principle one could partially counter this detrimental effect by {\em actively} exploiting shared randomness. 
On the other hand, the thermodynamic formulation that we presented implies that several entangled particles thermalizing with a global bath do not necessarily end in a non-entangled state. 
This result may not seem surprising if the global bath is entangled, but is less clear if the bath displays only classical correlations, which is always going to be the case for high temperatures. 
In fact, our results imply that this process is possible at all possible nonzero temperatures.

We also investigated the mechanism behind the existence of EPLTs and we suggested that it can be traced back to a speed-up of the subsystem thermalization. 
We gave evidence here that protocols locally realizing fast preparations of thermal states may be exploited in conjunction with shared randomness to preserve global entanglement during thermalization processes.

From a foundational perspective, our work contributes to the research line that tries to identify genuinely quantum effects in a thermodynamic setting, by showing that a crucial ingredient of the quantum world can survive local thermalizations, and explores the relation between local and global thermalizations. 
For example, the existence of EPLTs implies that even if a local agent has witnessed a {\em local} thermalization of {\em every} input state, she could still subsequently witness a stronger than classical heat back-flow from the cold to the hot body due to the residual entanglement~\cite{Jennings2010}.
Our results open up new research directions. Within the resource theory of channels~\cite{Rosset2018,LiuWinter2019,LiuYuan2019,Theurer2019}, they suggest a general study of free operations' ability to preserve a given resource~\cite{Hsieh2019}.
Furthermore, EPLTs can be understood as an example of a ``quantum channel marginal problem,'' i.e., a dynamical version of the well-known quantum state marginal problem. Our work shows the compatibility between local preparations of given (full-rank) states and global LOCC channels that are not entanglement destroying. 
Given the importance of the state marginal problem~\cite{SchillingPhD}, we expect its dynamical version is also worth exploring.

\section*{Acknowledgements}
We thank (in alphabetical order) Flavio Baccari, Manabendra Bera, Joseph Bowles, Daniel Cavalcanti, Christian Gogolin, Alejandro Pozas Kerstjens, Ray-Kuang Lee, Yeong-Cherng Liang, Jan Ko{\l}ody\'nski, Mohammad Mehboudi, and Marco T\'ulio Quintino for fruitful discussions and useful comments.
The authors acknowledge support from the ICFOstepstone - PhD Programme for Early-Stage Researchers in Photonics funded by the Marie Sk{\l}odowska-Curie Co-funding of regional, national and international programmes (GA665884) of the European Commission, the European Union's Marie Sk{\l}odowska-Curie individual Fellowships (H2020-MSCA-IF-2017, GA794842), as well as the ERC AdG CERQUTE, the AXA Chair in Quantum Information Science, Spanish MINECO (QIBEQI FIS2016-80773-P and Severo Ochoa SEV-2015-0522), Fundaci\'o Privada Cellex, and the Generalitat de Catalunya (CERCA Program and SGR1381).

\onecolumngrid
\appendix

\section{Zero Temperature EPLT With Ground State Degeneracy}\label{App:Degenerate}
The idea is to perform the $(U\otimes U^*)$-twirling operation in the zero energy subspace.
To be precise, consider the following protocol, where we assume two-fold ground state degeneracy on both local systems to illustrate the idea.

\emph{Step 1:} On the local system ${\rm X}$, consider the projective measurement 
given by \mbox{$\{\Pi^{\rm X}_0,\id_{\rm X}-\Pi^{\rm X}_0\}$}, where $\Pi^{\rm X}_0$ is the projector onto the ground energy subspace:
\begin{eqnarray}
\Pi^{\rm X}_0\coloneqq \sum_{g=0,1} \ket{0,g}\bra{0,g}_{\rm X} ,
\end{eqnarray}
where $g$ is a degeneracy index and $\{\ket{0,g}\}_{g=0,1}$ span the ground energy subspace of the local Hamiltonian $H_{\rm X}$.
The first step of the protocol is to measure $\{\Pi^{\rm A}_0,\id_{\rm A}-\Pi^{\rm A}_0\}\otimes\{\Pi^{\rm B}_0,\id_{\rm B}-\Pi^{\rm B}_0\}$.
For each local agent, if the outcome reads $\Pi^{\rm X}_0$, nothing is done; if the outcome reads $\id_{\rm X} - \Pi^{\rm X}_0$, then the agent discards the original input and prepares $\frac{\Pi^{\rm X}_0}{2}$.

\emph{Step 2:} Use shared randomness to achieve a $(U\otimes U^*)$-twirling operation on the ground energy subspace, denoted by $\mathcal{T}^0$. Formally, the channel corresponding to the above protocol is $\mathcal{T}^0\circ(\mathcal{L}_{\rm A}\otimes\mathcal{L}_{\rm B})$, where
\begin{eqnarray}
\mathcal{L}_{\rm X}(\cdot)\coloneqq\Pi^{\rm X}_0(\cdot)\Pi^{\rm X}_0 + \Phi_{\frac{\Pi^{\rm X}_0}{2}}[(\id_{\rm X}-\Pi^{\rm X}_0)(\cdot)(\id_{\rm X}-\Pi^{\rm X}_0)]
\end{eqnarray}
 for ${\rm X=A,B}$, where $\Phi_{\rho}(\cdot)\equiv\rho$ is the channel discarding the input and preparing $\rho$.
 
Note that this protocol gives a local thermalization because the output states will have, independently of the input, marginal $\frac{\Pi^{\rm X}_0}{2}$ on the local system ${\rm X}$, which is the desired thermal state in this case.
The entanglement preservation can be seen by choosing the input state as $\frac{1}{\sqrt{2}}(\ket{0,0}_{\rm A}\otimes\ket{0,0}_{\rm B} + \ket{0,1}_{\rm A}\otimes\ket{0,1}_{\rm B})$, which is invariant under the whole protocol. 
This proves the existence of an EPLT.

\section{Multipartite EPLT}\label{App:Multipartite}
First, the definition of local thermalization can be generalized naturally: 
\begin{adefinition}
Consider a multipartite system $\bigotimes_{i=1}^N\mathcal{H}_i$.
For a given collection of $N$ single party thermal states $\{\gamma_i\}_{i=1}^N$, a channel $\mathcal{E}$ on $\bigotimes_{i=1}^N\mathcal{H}_i$ is called a {\em local thermalization} to $\{\gamma_i\}_{i=1}^N$ if 
\begin{enumerate}
\item It is of the form \mbox{$\mathcal{E} = \int_\lambda \bigotimes_{i=1}^N\mathcal{E}^\lambda_i p_\lambda d\lambda$}, where each $\mathcal{E}_i^\lambda$ is a channel on the $i^{th}$ local system, $p_\lambda \geq 0$ and $\int p_\lambda d\lambda =1$;
\item ${\rm tr}_{\backslash i}[\mathcal{E}(\rho)] = \gamma_i$ for every $\rho$ and $i$, where ${\rm tr}_{\backslash i}$ denotes trace over all but the $i^{th}$ system.
\end{enumerate} 
\end{adefinition}

A natural question is whether there exists {\em genuinely} multipartite EPLT; that is, a multipartite EPLT whose output is genuinely multipartite entangled for some input. We now show such channel exists. 

To do so, we consider an $N$-qubit system. 
 Using an appropriate sequence of $N$-local operations and shared randomness (see Sec.~IV B in Ref.~\cite{Dur2000}), one can define a channel, denoted by $\mathcal{T}_{\rm GHZ}$, which brings arbitrary $N$-qubit input states $\rho$ to the following form:

\begin{eqnarray}\label{Eq:GHZ-Twirling}
\sum_{\sigma = \pm}\lambda_0^\sigma\proj{\Psi^\sigma_0} + \sum_{j=1}^{2^{(N-1)} - 1}\lambda_j\left(\proj{\Psi^+_j} + \proj{\Psi^-_j}\right),
\end{eqnarray}
where
\begin{eqnarray}
\ket{\Psi_j^\pm}\coloneqq\frac{1}{\sqrt{2}}\left(\ket{j}\otimes\ket{0} \pm \ket{{2^{(N-1)} - j - 1}}\otimes\ket{1}\right),
\end{eqnarray}
 and binary notation is used ($j=j_1 \dots j_{N-1}$).
 We note that $\ket{\Psi_0^+}$ is the GHZ state $\frac{1}{\sqrt{2}} (\ket{0 \dots 0} + \ket{1 \dots 1})$~\cite{GHZ}.
 One important feature of $\mathcal{T}_{\rm GHZ}$ is the following preservation property: for all $j=0,...,2^{(N-1)} - 1$ and $\sigma = \pm$,
 \begin{eqnarray}
 \bra{\Psi_j^\sigma}\mathcal{T}_{\rm GHZ}(\rho)\ket{\Psi_j^\sigma}=\bra{\Psi_j^\sigma}\rho\ket{\Psi_j^\sigma}.
 \end{eqnarray}
 In particular, a state of the following form will be invariant under $\mathcal{T}_{\rm GHZ}$:
 \begin{eqnarray}
 x\proj{\Psi_0^+} + (1-x)\frac{\id}{2^N},
 \end{eqnarray}
and it is genuinely multipartite entangled if and only if $x>\frac{1}{1+2^{N-1}}$~\cite{Dur2000}.
Due to this fact, one can define the following map for a given set of $N$ single-qubit thermal states $\{\gamma_i\}_{i=1}^{N}$ extending Eq.~\eqref{Eq:TheMap}:
\begin{eqnarray}
\mathcal{E}^{(\epsilon,N)}(\cdot)\coloneqq(1-\epsilon)\bigotimes_{i=1}^{N}\eta_i + \epsilon\mathcal{T}_{\rm GHZ}(\cdot),
\end{eqnarray}
where, for each party indexed by $i$, we define
\begin{eqnarray}\label{Eq:Multi-eta}
\eta_i^\epsilon\coloneqq\gamma_i + \frac{\epsilon}{1-\epsilon}\left(\gamma_i - \frac{\id_i}{2}\right).
\end{eqnarray}
Write $\gamma_i = P^i_0\proj{0}+P^i_1\proj{1}$. Then, by the same reasoning as for Lemma~\ref{Lemma:LT}, we require 
\begin{eqnarray}\label{Eq:Multi-epsilon}
0\le\epsilon\le2\min_i{P^i_{1}}
\end{eqnarray}
in order to make sure $\eta_i^\epsilon$'s are all quantum states.

To see why $\mathcal{E}^{(\epsilon,N)}$ is a local thermalization, one can use Eq.~\eqref{Eq:Multi-eta} and note that ${\rm tr}_{\backslash i}[\mathcal{T}_{\rm GHZ}(\rho)] = \frac{\id_i}{2}$ for all $i$.
Furthermore, $\bra{\Psi_0^+}\mathcal{E}^{(\epsilon,N)}(\rho)\ket{\Psi_0^+}\ge\epsilon\bra{\Psi_0^+}\rho\ket{\Psi_0^+}.$
Since high enough overlap with $\ket{\Psi_0^+}$ implies genuinely multipartite entanglement, we conclude that for high enough local temperatures, by setting $\epsilon=2\min_i{P^i_{1}}$, we achieve a genuinely multipartite EPLT.

\section{Proof of Theorem~\ref{prop:equivalence}}\label{App:Eqv}
\begin{proof}
First, we note that every local bath thermalization is by definition a local thermalization.
To prove the inverse statement, we recall that LOSR channels are of the form $\int_\Lambda\left(\mathcal{E}_{\rm A}^\lambda\otimes\mathcal{E}_{\rm B}^\lambda\right)p_\lambda d\lambda$, which is in the convex hull of the set of all product channels .
Being embedded in a finite Euclidean space, Carath\'eodory theorem implies that for each LOSR channel $\mathcal{E}$, there exists a finite set of product channels and a probability distribution $\{\mathcal{E}_{\rm A}^i\otimes\mathcal{E}_{\rm B}^i,p_i>0\}_{i=1}^D$ such that $\mathcal{E} = \sum_{i=1}^Dp_i(\mathcal{E}_{\rm A}^i\otimes\mathcal{E}_{\rm B}^i)$, where $D$ only depends on the local dimensions.
Then, for a given $i$ and ${\rm X = A,B}$, the Stinespring dilation theorem~\cite{Wolf} guarantees the existence of an ancillary space ${\rm X}'_i$ with dimension $d^2$ and a unitary operator $U_{{\rm X}{\rm X}'_i}$ acting on ${\rm X}{\rm X}'_i$ such that $\mathcal{E}_{\rm X}^i(\cdot) = {\rm tr}_{{\rm X}'_i}\left\{U_{{\rm X}{\rm X}'_i}\left[(\cdot)\otimes{\proj{0}}_{{\rm X}'_i}\right]U_{{\rm X}{\rm X}'_i}^\dagger\right\}$.
Since ${\rm X}'_i\simeq\mathbb{C}^{d^2}$ for all $i$, we can simply choose them to be the same Hilbert space, denoted by ${\rm X'}\simeq\mathbb{C}^{d^2}$, and write the corresponding unitary operator as $U_{\rm XX'}^i$.
Then we have
\begin{eqnarray}
\mathcal{E}(\cdot) = {\rm tr}_{\rm A'B'}\left\{\sum_{i=1}^D p_i\left(U_{\rm AA'}^i\otimes U_{\rm BB'}^i\right)\left[(\cdot)\otimes{\proj{00}}_{\rm A'B'}\right]\left(U_{\rm AA'}^i\otimes U_{\rm BB'}^i\right)^\dagger\right\}.
\end{eqnarray}
Now we define a space $\mathcal{H}_D \coloneqq{\rm span}\left\{\ket{i}\right\}_{i=1}^D$, 
and we introduce two additional ancillary spaces ${\rm A''} \simeq\mathcal{H}_D$ and ${\rm B''}$ $\simeq\mathcal{H}_D$. 
Then we can write
\begin{eqnarray}
\mathcal{E}(\cdot) = {\rm tr}_{\rm A'B'A''B''}\left\{\left(V_{\rm AA'A''}\otimes V_{\rm BB'B''}\right)\left[(\cdot)\otimes{\proj{00}}_{\rm A'B'}\otimes\sum_{i=1}^Dp_i\proj{ii}_{\rm A''B''}\right]\left(V_{\rm AA'A''}\otimes V_{\rm BB'B''}\right)^\dagger\right\},
\end{eqnarray}
where
\begin{eqnarray}
V_{\rm XX'X''}\coloneqq\sum_i U_{\rm XX'}^i\otimes{\proj{i}}_{\rm X''},
\end{eqnarray}
which is a unitary operator acting on ${\rm XX'X''}$.
The separable state $\sum_{i=1}^Dp_i\proj{ii}_{\rm A''B''}$ is full rank, hence it can be identified with a thermal state on ${\rm A''B''}$ by an appropriate choice of the Hamiltonian on these ancillas.
\end{proof}
This result shows that, as it is intuitive, the set of local bath thermalizations coincides with the set of local thermalizations.

\section{Speed-Up For Infinite Twirling Time}\label{App:t_u}
In this section, we will go through the detailed proof of the speed-up result with infinite twirling time, which is Eq.~\eqref{Eq:Speed-Up} (and the statement above it).
The strategy is to consider a finite-time one-shot approximate implementation of the twirling operation.
Then we replace the exact twirling operation in Eq.~\eqref{Eq:AlternativeEPLT} by this approximation, and then compute the realization time of {\em $\delta$-thermalization}: Here we say a channel $\Lambda$ {\em $\delta$-thermalizes} a state $\rho$ to a thermal state $\gamma$ if \mbox{$\norm{\Lambda(\rho) - \gamma}_\infty<\delta$}.

As the first step, we recall that the twirling operation is defined as~\cite{Bennett1996,Horodecki1999}
\begin{eqnarray}
\mathcal{T}(\cdot)\coloneqq\int_{U(d)}(U\otimes U^*)(\cdot) (U\otimes U^*)^\dagger dU,
\end{eqnarray}
where the average is over the Haar measure. 
We will consider the implementation of $\mathcal{T}$ by means of a finite sequence of unitaries as introduced in Ref.~\cite{Toth2007},
\begin{eqnarray}\label{Eq:ImplementationT}
\mathcal{T}^{(N)}_{\bf U}\coloneqq\prod_{k=1}^N\mathcal{T}_k,
\end{eqnarray}
with
\begin{eqnarray}
\mathcal{T}_k(\cdot)\coloneqq\frac{1}{2}\mathcal{I}(\cdot) + \frac{1}{2}(U_k\otimes U_k^*)(\cdot)(U_k\otimes U_k^*)^\dagger,
\end{eqnarray}
where each $U_k$ represents a random unitary and ${\bf U} = (U_1,...,U_N)$ is a vector of random variables.
Setting \mbox{$\norm{\mathcal{E}}_\infty\coloneqq\sup_\rho\norm{\mathcal{E}(\rho)}_\infty$} for a given channel $\mathcal{E}$, it was proven in Ref.~\cite{Toth2007} that
\begin{eqnarray}\label{Eq:T-Estimate}
\left<\norm{\mathcal{T} - \mathcal{T}^{(N)}_{\bf U}}_\infty^2\right><\frac{1}{2^N},
\end{eqnarray}
where $\left<(\cdot)\right>\coloneqq\int(\cdot)dU_1dU_2...dU_N$ is the average over the Haar measure. This follows from Eq.~(22) of Ref.~\cite{Toth2007} and the fact that the sup norm is upper bounded by the other $p$ norms.
In order to establish the speed-up result, we will first give a more detailed version of the result in Ref.~\cite{Toth2007} by assessing the probability that a given realization of $\mathcal{T}^{(N)}_{\bf U}$ is close to $\mathcal{T}$:
\begin{alemma}\label{AppLemma}
For every $\lambda>0$, we have
\begin{eqnarray}
{\rm Prob}\left( \norm{\mathcal{T} - \mathcal{T}^{(N)}_{\bf U}}_\infty^2 - \frac{1}{2^N} > \lambda \right) < \frac{1}{\lambda^22^N}.
\end{eqnarray}
\end{alemma}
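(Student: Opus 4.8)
The plan is to read the claim as a one-variable concentration statement and apply a Chebyshev-type bound to the single nonnegative random variable $X \defeq \norm{\mathcal{T} - \mathcal{T}^{(N)}_{\bf U}}_\infty^2$, whose randomness comes entirely from the Haar-random tuple ${\bf U} = (U_1,\dots,U_N)$. The only external input I would use is the expectation bound $\langle X \rangle < 2^{-N}$ already recorded in Eq.~\eqref{Eq:T-Estimate}, where $\langle \cdot \rangle$ denotes the Haar average. The apparent difficulty is that Eq.~\eqref{Eq:T-Estimate} controls only the \emph{first} moment of $X$, whereas the target bound carries a $\lambda^{-2}$ and therefore looks like it needs a variance estimate; the whole proof hinges on a cheap way to manufacture such a variance bound.

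The first step is to establish the pointwise bound $0 \le X \le 1$. Indeed, for every density matrix $\rho$ both $\mathcal{T}(\rho)$ and $\mathcal{T}^{(N)}_{\bf U}(\rho)$ are states, so $(\mathcal{T} - \mathcal{T}^{(N)}_{\bf U})(\rho)$ is a traceless Hermitian operator whose eigenvalues lie in $[-1,1]$; hence $\norm{(\mathcal{T} - \mathcal{T}^{(N)}_{\bf U})(\rho)}_\infty \le 1$, and taking the supremum over $\rho$ gives $\norm{\mathcal{T} - \mathcal{T}^{(N)}_{\bf U}}_\infty \le 1$, so $X \le 1$. The key observation — and the point I expect to be the one nontrivial idea rather than a routine estimate — is that boundedness upgrades the mean bound into a variance bound essentially for free: since $0 \le X \le 1$ we have $X^2 \le X$ pointwise, whence $\langle X^2 \rangle \le \langle X \rangle < 2^{-N}$, and therefore $\mathrm{Var}(X) = \langle X^2 \rangle - \langle X \rangle^2 \le \langle X^2 \rangle < 2^{-N}$.

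The final step is purely mechanical. Because $\langle X \rangle < 2^{-N}$, the event $\{X - 2^{-N} > \lambda\}$ is contained in $\{X - \langle X\rangle > \lambda\} \subseteq \{|X - \langle X\rangle| > \lambda\}$, so Chebyshev's inequality yields
\begin{equation}
P\!\left(X - \tfrac{1}{2^N} > \lambda\right) \le P\!\left(|X - \langle X\rangle| > \lambda\right) \le \frac{\mathrm{Var}(X)}{\lambda^2} < \frac{1}{\lambda^2\,2^N},
\end{equation}
which is exactly the asserted bound (the strictness propagating from the strict inequality in Eq.~\eqref{Eq:T-Estimate}). In short, the substance is confined to the boundedness argument of the second paragraph; everything else is the standard mean-to-variance and Chebyshev machinery.
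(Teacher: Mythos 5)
Your proposal is correct, and at the structural level it is the same argument as the paper's: Chebyshev's inequality applied to the single random variable $X\coloneqq\norm{\mathcal{T}-\mathcal{T}^{(N)}_{\bf U}}_\infty^2$ with a variance bound $\mathrm{Var}(X)<2^{-N}$, plus the event inclusion $\lbrace X>2^{-N}+\lambda\rbrace\subseteq\lbrace|X-\langle X\rangle|>\lambda\rbrace$ justified by $\langle X\rangle<2^{-N}$. Where the two write-ups differ is in which half of the argument gets spelled out. The paper only asserts that the variance ``can be shown to be upper bounded by $1/2^N$ via direct computation''; your boundedness step supplies that computation in its cleanest form. Indeed, since the paper's sup norm for superoperators takes the supremum over states, and since for every state $\rho$ both $\mathcal{T}(\rho)$ and $\mathcal{T}^{(N)}_{\bf U}(\rho)$ are states, one has $|\bra{v}(\mathcal{T}-\mathcal{T}^{(N)}_{\bf U})(\rho)\ket{v}|\le 1$ for every unit vector $\ket{v}$, hence $0\le X\le 1$, hence $X^2\le X$ and $\mathrm{Var}(X)\le\langle X^2\rangle\le\langle X\rangle<2^{-N}$ by Eq.~(C4). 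That is a genuinely tidy way to manufacture the second-moment control from the first moment.

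The one thing you dispatch silently is precisely what the paper regards as ``the only thing to be checked'': the applicability of Chebyshev requires $X$ to be a measurable function of ${\bf U}$ on $U(d)^{\times N}$ (your bound $X\le 1$ then gives integrability, but only once measurability is in hand, and without measurability the probability $P(\cdot)$ of the event is not even defined). Almost the entirety of the paper's proof is devoted to this point, via a quantitative Lipschitz estimate showing continuity of ${\bf U}\mapsto\norm{\mathcal{T}-\mathcal{T}^{(N)}_{\bf U}}_\infty$ in the metric $d_N({\bf U},{\bf V})=\sum_{i=1}^N\norm{U_i-V_i}_\infty$, built from $\norm{\mathcal{U}_i-\mathcal{V}_i}_\infty\le 4\norm{U_i-V_i}_\infty$ and a telescoping over the $2^N$ product terms. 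This omission is not fatal: continuity (and hence measurability) also follows softly by observing that $X$ is a supremum, over the compact set of states, of functions jointly continuous in $({\bf U},\rho)$. But a complete version of your proof should at least record this step, since it is the hypothesis under which the mean-to-variance and Chebyshev machinery you invoke is legitimate.
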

\begin{proof}
This fact can be seen by applying Chebyshev's inequality on the random variable $\norm{\mathcal{T} - \mathcal{T}^{(N)}_{\bf U}}_\infty^2$, whose variance can be shown to be upper bounded by $\frac{1}{2^N}$ via direct computation. 
To see this, we let $\Delta\coloneqq\norm{\mathcal{T} - \mathcal{T}^{(N)}_{\bf U}}_\infty^2$ and computation shows
\begin{eqnarray}
\left<\left(\Delta - \left<\Delta\right>\right)^2\right> = \left<\Delta^2\right> - \left<\Delta\right>^2\le\left<\Delta^2\right>\le\left<\norm{\mathcal{T} - \mathcal{T}^{(N)}_{\bf U}}_2^2\right><\frac{1}{2^N},
\end{eqnarray}
where $\norm{\mathcal{E}}_2\coloneqq\sup_\rho\norm{\mathcal{E}(\rho)}_2$ for the Hilbert-Schmidt norm $\norm{\cdot}_2$, and
the last inequality is due to the relation $\left<\norm{\left(\mathcal{T} - \mathcal{T}^{(N)}_{\bf U}\right)(\rho)}_2^2\right> = \frac{1}{2^N}\left(\norm{\rho}_2^2 - \norm{\mathcal{T}(\rho)}_2^2\right)<\frac{1}{2^N}$ given by Eq.~(22) in Ref.~\cite{Toth2007}.
Hence, the only thing to be checked is the applicability of Chebyshev's inequality, which requires the given random variable to be integrable.
It suffices to show the continuity of $\norm{\mathcal{T} - \mathcal{T}^{(N)}_{\bf U}}_\infty$ in the argument ${\bf U}= (U_1,...,U_N)$ with respect to the metric $d_N$ defined by $d_N({\bf U},{\bf V})\coloneqq\sum_{i=1}^N\norm{U_i - V_i}_\infty$.

Consider a given pair of sequences of unitaries ${\bf U}$ and ${\bf V}$.  Using the notation $\mathcal{U}_{i}(\cdot)\coloneqq(U_i\otimes U_i^*)(\cdot)(U_i\otimes U_i^*)^\dagger$, $\mathcal{V}_{i}(\cdot)\coloneqq(V_i\otimes V_i^*)(\cdot)(V_i\otimes V_i^*)^\dagger$, we get
\begin{eqnarray}
\left| \norm{\mathcal{T} - \mathcal{T}^{(N)}_{\bf U}}_\infty - \norm{\mathcal{T} - \mathcal{T}^{(N)}_{\bf V}}_\infty \right| \le \norm{\mathcal{T}^{(N)}_{\bf U} - \mathcal{T}^{(N)}_{\bf V}}_\infty\le\frac{1}{2^N}\sum_{\bf s}\norm{\prod_{i=1}^{j_{\bf s}}\mathcal{U}_{s_i} - \prod_{i=1}^{j_{\bf s}}\mathcal{V}_{s_i}}_\infty,
\end{eqnarray}
where we repeatedly used the triangle inequality and the summation $\sum_{\bf s}$ is over all the possible strings of ordered indices ${\bf s} = \{s_1,s_2,...,s_{j_{\bf s}}\}\subseteq\{1,2,...,N\}$ with $j_{\bf s}\le N$.
Since
\begin{eqnarray}
&\norm{\prod_{i=1}^{j_{\bf s}}\mathcal{U}_{s_i} - \prod_{i=1}^{j_{\bf s}}\mathcal{V}_{s_i}}_\infty& = \norm{\mathcal{U}_{s_{j_{\bf s}}}\circ\prod_{i=1}^{j_{\bf s}-1}\mathcal{U}_{s_i} - \mathcal{V}_{s_{j_{\bf s}}}\circ\prod_{i=1}^{j_{\bf s}-1}\mathcal{V}_{s_i}}_\infty\nonumber\\
&&\le\norm{\mathcal{U}_{s_{j_{\bf s}}}\circ\left(  \prod_{i=1}^{j_{\bf s}-1}\mathcal{U}_{s_i} - \prod_{i=1}^{j_{\bf s}-1}\mathcal{V}_{s_i}   \right)}_\infty + \norm{\left(    \mathcal{U}_{s_{j_{\bf s}}} - \mathcal{V}_{s_{j_{\bf s}}}    \right)\circ\prod_{i=1}^{j_{\bf s}-1}\mathcal{V}_{s_i}}_\infty\nonumber\\
&&  = \norm{\prod_{i=1}^{j_{\bf s}-1}\mathcal{U}_{s_i} - \prod_{i=1}^{j_{\bf s}-1}\mathcal{V}_{s_i}}_\infty + \norm{\mathcal{U}_{s_{j_{\bf s}}} - \mathcal{V}_{s_{j_{\bf s}}}}_\infty,
\end{eqnarray}
we conclude that
\begin{eqnarray}
\norm{\prod_{i=1}^{j_{\bf s}}\mathcal{U}_{s_i} - \prod_{i=1}^{j_{\bf s}}\mathcal{V}_{s_i}}_\infty\le\sum_{i=1}^{j_{\bf s}}\norm{\mathcal{U}_{s_i} - \mathcal{V}_{s_i}}_\infty.
\end{eqnarray}
 The continuity in the argument ${\bf U}$ in the metric $d_N$ follows from the fact that 
\begin{eqnarray}
&\norm{\mathcal{U}_i - \mathcal{V}_i}_\infty  &\leq  \norm{(U_i \otimes U^*_i - V_i \otimes V^*_i) \rho (U_i \otimes U^*_i)^\dag}_\infty + \norm{-(V_i \otimes V^*_i)  \rho (V_i \otimes V^*_i - U_i \otimes U^*_i)^\dagger}_\infty \nonumber\\
&&\leq 2 \norm{ U_i \otimes U^*_i- V_i \otimes V^*_i}_\infty \leq 2(\| U_i \otimes \id \|_\infty \| \id\otimes(U_i^* - V^*_i) \|_\infty + \| (U_i - V_i)\otimes\id \|_\infty \| \id \otimes V_i^* \|_\infty)\nonumber\\
&& = 4 \norm{ U_i - V_i}_\infty,
\end{eqnarray}
where in the first step we added and subtracted $(V_i \otimes V^*_i) \rho (U_i \otimes U^*_i)^\dag$ and used the triangle inequality; in the second step, we used the fact that for any two linear operators $A$ and $B$, $\norm {AB}_\infty \leq \norm {A}_\infty \norm {B}_\infty$ (submultiplicativity); in the third step, we added and subtracted $U_i \otimes V_i^*$ and again used the triangle inequality and submultiplicativity; and in the last step we used $\| A \otimes \id \|_\infty = \| A \|_\infty$.
\end{proof}
The above lemma implies that, for an arbitrarily small $\lambda$, with probability $1- O(\lambda^{-2} e^{-N})$, the realization $\mathcal{T}^{(N)}_{\bf U}$ is $\lambda + O(e^{-N})$ close to $\mathcal{T}$.

Now we formally state and prove the speed-up result, which is Eq.~\eqref{Eq:Speed-Up} (and the statement above it).
We recall the following notations: In the subsystem ${\rm X}$, we consider the following implementation of EPLT:
\begin{eqnarray}
\pe_{\rm X}^{({N},{t})} = \mathcal{P}_{\eta_{\rm X}^{dP_{\rm min}^{\rm X}}}^{t_{\rm X}}\circ\mathcal{T}^{(N)}_{{\bf U},{\rm X}},
\end{eqnarray} where $\mathcal{T}^{(N)}_{{\bf U},{\rm X}}\coloneqq\prod_{k=1}^N\left({\rm tr}_{\setminus \rm X}\circ\mathcal{T}_k\right)$ and $\mathcal{P}_\gamma^t(\cdot)\coloneqq e^{-\frac{t}{\tau_{\gamma}}}(\cdot) + \left(1-e^{-\frac{t}{\tau_{\gamma}}}\right)\gamma$ is given by Eq.~\eqref{Eq:PTM}.
Also, \mbox{$\eta_{\rm X}^{\epsilon_{\rm X}}\coloneqq \gamma_{\rm X} + \frac{\epsilon_{\rm X}}{1-\epsilon_{\rm X}} \left(\gamma_{\rm X} - \frac{\id_{\rm X}}{d}\right)$}.
\begin{theorem}
Let $\gamma_{\rm X}$ be the local thermal state.
If 
\begin{eqnarray}
\tau_{\gamma_{\rm X}} > t_U\times\frac{8}{\ln2},
\end{eqnarray}
then for every $\rho_{\rm X}\neq\gamma_{\rm X}$ and $p_*\in(0,1)$, there exists $\delta'>0$ such that for every $\delta\in(0,\delta')$:
\begin{enumerate}
\item there exists an integer $N_\delta\coloneqq\left\lceil8\log_2{\frac{d^2P_{\rm min}^{\rm X}\sqrt{2}}{\delta}}\right\rceil$ and a time $t_1$ such that $\Lambda_{\rm X}^{(N_\delta,t_1)}$ $\delta$-thermalizes $\rho_{\rm X}$ to $\gamma_{\rm X}$ with success probability 
\begin{eqnarray}
1- \left[ \frac{\delta}{d^2 P_{\rm min}^{\rm X} \sqrt{2}} \right]^4,
\end{eqnarray}
which can be chosen to be larger than $p_*$. 
\item $\mathcal{P}_{\gamma_{\rm X}}^t$ $\delta$-thermalizes $\rho_{\rm X}$ to $\gamma_{\rm X}$ only if $t\ge t_2$.
\item $t_1<t_2$.
\end{enumerate}
\end{theorem}
\begin{proof}
We use the shortcut notation $\eta_{\rm X}$ for the state $\eta_{\rm X}^{d P_{\rm min}^{\rm X}}$.
To get an explicit estimate on time, we take $\lambda = 2^{-\frac{N}{4}}$ and use Lemma~\ref{AppLemma}.
By noting that $t_{\rm X} = -\tau_{\eta_{\rm X}}\ln{d P_{\rm min}^{\rm X}}$, we start with the computation of the local state,
\begin{eqnarray}
&\norm{\pe_{\rm X}^{({N},{t})}(\rho_{\rm X}) - \gamma_{\rm X}}_\infty&=\norm{\mathcal{P}_{\eta_{\rm X}}^{t_{\rm X}}\circ\mathcal{T}^{(N)}_{{\bf U},{\rm X}}(\rho_{\rm X}) - \gamma_{\rm X}}_\infty
\nonumber\\
&&=\norm{d P_{\rm min}^{\rm X}\times\mathcal{T}^{(N)}_{{\bf U},{\rm X}}(\rho_{\rm X}) + \left(1-d P_{\rm min}^{\rm X}\right)\times\eta_{\rm X} - \gamma_{\rm X}}_\infty\nonumber\\
&&= d P_{\rm min}^{\rm X}\norm{\mathcal{T}^{(N)}_{{\bf U},{\rm X}}(\rho_{\rm X}) - \frac{\id_{\rm X}}{d}}_\infty\nonumber\\
&&\le d^2P_{\rm min}^{\rm X}\norm{\mathcal{T}^{(N)}_{\bf U} - \mathcal{T}}_\infty\nonumber\\
&&<d^2P_{\rm min}^{\rm X}\sqrt{2}\times 2^{-\frac{N}{8}},
\end{eqnarray}
which holds with probability $1 - 2^{-\frac{N}{2}}$.
In the first inequality, we used the relation \mbox{$\norm{{\rm tr}_{\rm Y}(\cdot)}_\infty\le\norm{{\rm tr}_{\rm Y}(\cdot)}_1\le\norm{\cdot}_1\le d\norm{\cdot}_\infty$} and $\norm{\mathcal{Q}(\rho)}_\infty\le\norm{\mathcal{Q}}_\infty$ for all superoperators $\mathcal{Q}$ and states $\rho$; in the second inequality we used \mbox{$\sqrt{\lambda + 2^{-N}}<\sqrt{2\lambda} = \sqrt{2}\times2^{-\frac{N}{8}}$.}
This estimate means that for any given $\delta\in(0,1)$, there exists a sufficiently large $N=N_\delta$ to let the above upper bound be smaller than $\delta$; that is, this choice of $N$ ensures $\delta$-thermalization of {\em every} local input $\rho_{\rm X}$, with success probability $1 - 2^{-\frac{N}{2}}$. 
It suffices to take
\begin{eqnarray}\label{Eq:N_delta}
N_\delta\coloneqq\left\lceil8\log_2{\frac{d^2P_{\rm min}^{\rm X}\sqrt{2}}{\delta}}\right\rceil,
\end{eqnarray}
where $\lceil x\rceil$ is the smallest integer larger than $x$.

Now consider a given $\delta\in(0,1)$ and a given local input state $\rho_{\rm X}$.
Then, $t_{\rm X} = -\tau_{\eta_{\rm X}}\ln{d P_{\rm min}^{\rm X}}$ and $N_\delta$ gives us the following total implementation time of the channel $\pe_{\rm X}^{({N_\delta},{t_1})}$:
\begin{eqnarray}\label{Eq:N-EPLT-time}
t_1 = t_{\rm X} + N_\delta t_U = \tau_{\eta_{\rm X}}\ln{\frac{1}{dP_{\rm min}^{\rm X}}} + N_\delta t_U.
\end{eqnarray}

Now, if Alice and Bob simply leave their local systems in contact with local independent baths, the partial thermalization model $\delta$-thermalizes the local state $\rho_{\rm X}$ in a time
\begin{eqnarray}\label{Eq:PT-Time}
t_2 = \tau_{\gamma_{\rm X}}\ln{\frac{\norm{\rho_{\rm X}-\gamma_{\rm X}}_\infty}{\delta}}.
\end{eqnarray}

Combining Eqs.~\eqref{Eq:N_delta},~\eqref{Eq:N-EPLT-time}, and~\eqref{Eq:PT-Time}, we learn that
$t_1<t_2$, with probability $1 - 2^{-\frac{N_\delta}{2}}$, if $0<\tau_{\gamma_{\rm X}}\ln{\frac{\norm{\rho_{\rm X}-\gamma_{\rm X}}_\infty}{\delta}} +  \tau_{\eta_{\rm X}}\ln{(dP_{\rm min}^{\rm X})} - N_\delta t_U.$
This is true if
\begin{eqnarray}\label{Eq:SuffCondi}
\norm{\rho_{\rm X}-\gamma_{\rm X}}_\infty > f\times  \delta^{\left(1 - \frac{t_U}{\tau_{\gamma_{\rm X}}}\frac{8}{\ln{2}}\right)},
\end{eqnarray}
where $f\coloneqq(dP_{\rm min}^{\rm X})^{-\frac{\tau_{\eta_{\rm X}}}{\tau_{\gamma_{\rm X}}}}e^{\frac{t_U}{\tau_{\gamma_{\rm X}}}}\left(d^2P_{\rm min}^{\rm X}\sqrt{2}\right)^{\frac{t_U}{\tau_{\gamma_{\rm X}}}\frac{8}{\ln{2}}}$ is a constant in $\delta$.
Note that $f$ is finite for all possible values of $\frac{\tau_{\eta_{\rm X}}}{\tau_{\gamma_{\rm X}}}$.
This means that when the exponent of $\delta$ in Eq.~\eqref{Eq:SuffCondi} is positive, it is always possible to find a small enough $\delta$ to achieve Eq.~\eqref{Eq:SuffCondi}.
Specifically, suppose
\begin{eqnarray}
\tau_{\gamma_{\rm X}} > t_U\times\frac{8}{\ln2}\approx t_U\times 11.5416.
\end{eqnarray}
Then, for any given $\rho_{\rm X}\neq\gamma_{\rm X}$, a successful implementation of twirling will demonstrate $t_1<t_2$ (i.e., a speed-up effect) for every $\delta>0$ small enough, where the success probability is given by
\begin{eqnarray}
1- \left[ \frac{\delta}{d^2 P_{\rm min}^{\rm X} \sqrt{2}} \right]^4.
\end{eqnarray}
This completes the proof.
\end{proof}
Since in most cases $t_U \ll \tau_{\gamma_X}$, $\delta$-thermalization with small enough $\delta $ is faster in the EPLT than in the standard thermalization model, even taking into account the time to implement the random unitaries.


\begin{thebibliography}{99}
	


	\bibitem{Dobrzanski2014} R. Demkowicz-Dobrza\'nski and L. Maccone, Phys. Rev. Lett. {\bf113}, 250801 (2014).
	
	\bibitem{Pironio2010} S. Pironio, A. Ac\'in, S. Massar, A. Boyer de la Giroday, D. N. Matsukevich, P. Maunz, S. Olmschenk, D. Hayes, L. Luo, T. A. Manning, and C. Monroe, Nature (London) {\bf464}, 1021 (2010).
	
	\bibitem{Cleve1997} R. Cleve and H. Buhrman, Phys. Rev. A {\bf56}, 1201 (1997).
	
	\bibitem{Bravyi2017} S. Bravyi, D. Gosset, and R. Koenig, Science {\bf362}, 308 (2018). 
	
	\bibitem{Oppenheim2002} J. Oppenheim, M. Horodecki, P. Horodecki, and R. Horodecki, Phys. Rev. Lett. {\bf89}, 180402 (2002).
	
	\bibitem{Funo2013} K. Funo, Y. Watanabe, and M. Ueda, Phys. Rev. A {\bf88}, 052319 (2013).
	
	\bibitem{Perarnau-Llobet2015} M. Perarnau-Llobet, K. V. Hovhannisyan, M. Huber, P. Skrzypczyk, N. Brunner, and A. Ac\'in, Phys. Rev. X {\bf5}, 041011 (2015).
	
	\bibitem{del_Rio2011} L. del Rio, J. \AA berg, R. Renner, O. Dahlsten, and V. Vedral, Nature (London) {\bf474}, 61 (2011).
	
	\bibitem{Jennings2010} D. Jennings and T. Rudolph, Phys. Rev. E {\bf 81}, 061130 (2010).
	
	\bibitem{Kaufman2016}  A. M. Kaufman, M. E. Tai, A. Lukin, M. Rispoli, R. Schittko, P. M. Preiss, and M. Greiner,  Science {\bf353}, 794 (2016).
	
	\bibitem{popescu2006entanglement} S. Popescu, A. J. Short, and A. Winter, Nat. Phys. {\bf 2}, 754 (2006).
	
	\bibitem{Linden2009} N. Linden, S. Popescu, A. J. Short, and A. Winter Phys. Rev. E {\bf79}, 061103 (2009).
	
	\bibitem{footnote-channel} {\em Quantum channels}, or simply {\em channels}, are also known as {\em completely positive trace-preserving maps}, which are mapping that describe general physical dynamics~\cite{QCI-book}.
	
	\bibitem{QCI-book} M. A. Nielson and I. L. Chuang, {\em Quantum Computation and Quantum Information}, 10th anniversary ed. (Cambridge University Press, Cambridge, 2010).
	
	\bibitem{Horodecki2003} M. Horodecki, P. W. Shor, and M. B. Ruskai, Rev. Math. Phys. {\bf 15}, 629 (2003).
	
	\bibitem{Bennett1996} C. H. Bennett, G. Brassard, S. Popescu, B. Schumacher, J. A. Smolin, and W. K. Wootters, Phys. Rev. Lett. {\bf 76}, 722 (1996).
	
	\bibitem{Horodecki1999} M. Horodecki and P. Horodecki, Phys. Rev. A {\bf 59}, 4206 (1999).
	
	\bibitem{Peres1996} A. Peres, Phys. Rev. Lett. {\bf 77}, 1413 (1996).
	
	\bibitem{Horodecki1996} M. Horodecki, P. Horodecki, and R. Horodecki, Phys. Lett. A {\bf 223}, 1 (1996).	
	
	\bibitem{Horodecki1999-2} M. Horodecki, P. Horodecki, and R. Horodecki, Phys. Rev. A {\bf 60}, 1888 (1999).
	
	\bibitem{Albeverio2002} S. Albeverio, S.-M. Fei, and W.-L. Yang, Phys. Rev. A {\bf 66}, 012301 (2002).
	
	\bibitem{Zhao2010} M.-J. Zhao, Z.-G. Li, S.-M. Fei, and Z.-X. Wang, J. Phys. A: Math. Theor. {\bf 43}, 275203 (2010).

	\bibitem{Cavalcanti2013} D. Cavalcanti, A. Ac\'in, N. Brunner, and T. V\'ertesi, Phys. Rev. A {\bf 87}, 042104 (2013).
	
	\bibitem{RMP-Bell} N. Brunner, D. Cavalcanti, S. Pironio, V. Scarani, and S.Wehner, Rev. Mod. Phys. {\bf 86}, 419 (2014).
	
	\bibitem{Quintino2016} M. T. Quintino, N. Brunner, and M. Huber, Phys. Rev. A {\bf 94}, 062123 (2016).
	
	\bibitem{Hsieh2016} C.-Y. Hsieh, Y.-C. Liang, and R.-K. Lee, Phys. Rev. A {\bf 94}, 062120 (2016).
	
	\bibitem{Hsieh2018E} C.-Y. Hsieh and R.-K. Lee, Phys. Rev. A {\bf96}, 012107 (2017); {\bf 97}, 059904(E) (2018).
	
	\bibitem{RMP-Ent} R. Horodecki, P. Horodecki, M. Horodecki, and K. Horodecki, Rev. Mod. Phys. {\bf81}, 865 (2009).
	
	\bibitem{GHZ} D. M. Greenberger, M. A. Horne, and A. Zeilinger, {\em Going Beyond Bell’s Theorem in Bell’s Theorem, Quantum
	Theory, and Conceptions of the Universe} (Kluwer Academic, Dordrecht, 1989); arXiv:0712.0921.
	
	\bibitem{footnote2} The partial thermalization model can be derived from collision models~\cite{Scarani2002} and can be seen as a particular realization of the Davies dynamical semigroup~\cite{Roga2010}. 
	
	\bibitem{footnote1} One might wonder why we do not simply do a sudden quench of the local Hamiltonians so that the local state is thermal. However, note that this protocol depends on the input state. In other words, it does not satisfy condition~\ref{condition2} in the definition of local thermalization.
	
	\bibitem{footnote4} More precisely, the completion time is finite except for the trivial cases $\tau_{\eta_{\rm X}^{\epsilon_{\rm X}}}=\infty$ and $\epsilon_{\rm X}=0$. The first is unphysical and the second does not give an EPLT.
	
	\bibitem{Toth2007}G. T\'oth and J. J. Garc\'ia-Ripoll, Phys. Rev. A {\bf75}, 042311 (2007).
	
	\bibitem{footnote3} Here the approximation means the following upper bound on the mean square error: \mbox{$\left<\norm{\mathcal{T} -\prod_{k=1}^N\mathcal{T}_k}_\infty^2\right><\frac{1}{2^N}$}~\cite{Toth2007}, where the notation \mbox{$\left<(\cdot)\right>\coloneqq\int(\cdot)dU_1dU_2...dU_N$} is the average over the Haar measure.
	
	\bibitem{footnote6} We say a channel $\Lambda$ {\em $\delta$-thermalizes} a state $\rho$ to a thermal state $\gamma$ if \mbox{$\norm{\Lambda(\rho) - \gamma}_\infty<\delta$}.

	\bibitem{Rosset2018} D. Rosset, F. Buscemi, and Y.-C. Liang, Phys. Rev. X {\bf8}, 021033 (2018).
	
	\bibitem{LiuWinter2019} Z.-W. Liu and A. Winter, arXiv:1904.04201.
	
	\bibitem{LiuYuan2019} Y. Liu and X. Yuan, Phys. Rev. Res. {\bf2}, 012035 (2020). 

	\bibitem{Theurer2019} T. Theurer, D. Egloff, L. Zhang, and M. B. Plenio, Phys. Rev. Lett. {\bf122}, 190405 (2019).
	
	\bibitem{Hsieh2019} C.-Y. Hsieh, Quantum {\bf4}, 244 (2020).	
	
	\bibitem{SchillingPhD} C. Schilling, {\em Quantum Marginal Problem and its Physical Relevance}, Ph.D. thesis, ETH Zurich, 2014; arXiv:1507.00299.

	\bibitem{Scarani2002} V. Scarani, M. Ziman, P. {\v{S}}telmachovi{\v{c}}, N. Gisin and V. Bu{\v{z}}ek, Phys. Rev. Lett {\bf 88}, 097905 (2002).
	
	\bibitem{Roga2010} W. Roga, M. Fannes, and K. $\dot{\rm Z}$yczkowski, Rep. Math. Phys. {\bf 66}, 311 (2010).
	
	\bibitem{Dur2000} W. D$\ddot{\rm u}$r and J. I. Cirac, Phys. Rev. A {\bf61}, 042314 (2000).
	
	\bibitem{Wolf} M. M. Wolf, {\em Quantum Channels \& Operations Guided Tour} (2012), \href{https://www-m5.ma.tum.de/foswiki/pub/M5/ Allgemeines/MichaelWolf/QChannelLecture.pdf}{https://www-m5.ma.tum.de/foswiki/pub/M5/ Allgemeines/MichaelWolf/QChannelLecture.pdf}.
	
	
	
\end{thebibliography}
\end{document}